\newtheorem{lemma}{Lemma}
\newtheorem{proposition}{Proposition}
\newtheorem{corollary}{Corollary}
\newtheorem{fact}{Fact}
\newtheorem{remark}{Remark}
\newtheorem{assumption}{Assumption}
\def\begcen{\begin{center}}
\def\endcen{\end{center}}
\newcommand{\col}{ \mbox{col} }
\newcommand{\rank}{ \mbox{rank } }
\def\caly{{\cal Y}}
\def\hal{{1 \over 2}}
\def\liminf{\lim_{t \to \infty}}
\def\L2{{\cal L}_2}
\def\L2e{{\cal L}_{2e}}
\def\rea{\mathbb{R}}
\def\sign{\mbox{sign}}
\def\adj{\mbox{adj}}
\def\begmat#1{\begin{bmatrix}#1\end{bmatrix}}
\def\begali#1{\begin{align}{#1}\end{align}}
\def\begalis#1{\begin{align*}{#1}\end{align*}}
\def\begequarr{\begin{eqnarray}}
\def\endequarr{\end{eqnarray}}
\def\begequarrs{\begin{eqnarray*}}
\def\endequarrs{\end{eqnarray*}}
\def\begarr{\begin{array}}
\def\endarr{\end{array}}
\def\begequ{\begin{equation}}
\def\endequ{\end{equation}}
\def\lab{\label}
\def\begdes{\begin{description}}
\def\enddes{\end{description}}
\def\begenu{\begin{enumerate}}
\def\begite{\begin{itemize}}
\def\endite{\end{itemize}}
\def\endenu{\end{enumerate}}
\def\lef[{\left[\begin{array}}
\def\rig]{\end{array}\right]}
\def\begcen{\begin{center}}
\def\endcen{\end{center}}
\def\begrem{\begin{remark}\rm}
\def\endrem{\end{remark}}
\def\begassum{\begin{assumption}}
\def\endassum{\end{assumption}}
\def\begassums{\begin{assumption*}}
\def\endassums{\end{assumption*}}
\def\begassu{\begin{ass}}
\def\endassu{\end{ass}}
\def\beglem{\begin{lemma}}
\def\endlem{\end{lemma}}
\def\begcor{\begin{corollary}}
\def\endcor{\end{corollary}}
\def\begfac{\begin{fact}}
\def\endfac{\end{fact}}
\def\TAC{{\it IEEE Trans. Automat. Contr.}}
\def\AUT{{\it Automatica}}
\def\SCL{{\it Systems and Control Letters}}
\def\liminf{\lim_{t \to \infty}}
\def\L2e{{\cal L}_{2e}}
\def\rea{\mathbb{R}}
\def\intnum{\mathbb{Z}}
\def\sign{\mbox{sign}}
\def\adj{\mbox{adj}}
\def\col{\mbox{col}}
\def\hal{{1 \over 2}}
\def\et{\varepsilon_t}
\def\rank{\mbox{rank}\;}
\def\AJC{{\it Asian Journal of Control}}
\def\ARC{{\it Annual Reviews in Control}}
\def\IJACSP{{\it Int. J. on Adaptive Control and Signal Processing}}
\def\TAC{{\it IEEE Trans. Automatic Control}}
\def\EJC{{\it European Journal of Control}}
\def\SCL{{\it Systems \& Control Letters}}
\def\AUT{{\it Automatica}}
\def\CSM{{\it IEEE Control Systems Magazine}}
\def\CSL{{\it IEEE Control Systems Letters}}
\def\beal#1{\begin{align}{#1}\end{align}}
\def\begali#1{\begin{align}{#1}\end{align}}
\def\begalis#1{\begin{align*}{#1}\end{align*}}
\def\begmat#1{\begin{bmatrix}#1\end{bmatrix}}
\def\reapos{\mathbb{R}_{>0}}
\def\intnum{\mathbb{Z}}
\def\begsubequ{\begin{subequations}}
	\def\endsubequ{\end{subequations}}
\begin{document}

\begin{frontmatter}
\title{A Globally Convergent Estimator of the Parameters of the Classical Model of a Continuous Stirred  Tank Reactor}
\author[ITMO]{Anton Pyrkin}\ead{a.pyrkin@gmail.com}
\author[ITMO]{Alexey Bobtsov}\ead{bobtsov@mail.ru}
\author[ITMO,ITAM]{Romeo Ortega}\ead{romeo.ortega@itam.mx}
\author[ITAM]{Jose Guadalupe Romero}\ead{jose.romerovelazquez@itam.mx}
\author[UCL]{Denis Dochain}\ead{denis.dochain@uclouvain.be}
\address[ITMO]{Department of Control Systems and Robotics, ITMO University, Kronverkskiy av. 49, Saint Petersburg, 197101, Russia}
\address[ITAM]{Departamento Acad\'{e}mico de Sistemas Digitales, ITAM, Progreso Tizap\'an 1, Ciudad de M\'exico, 04100, M\'{e}xico}
\address[UCL]{ICTEAM, Universit\'e Catholique de Lovain, Avenue Georges Lem\^aitre 4-6, 1348 Louvain-la-Neuve, Belgium}

\begin{abstract}
In this paper we provide the first solution to the challenging problem of designing a {\em globally exponentially convergent} estimator for the parameters of the standard model of a continuous stirred tank reactor. Because of the presence of {\em non-separable} exponential nonlinearities in the system dynamics that appear in Arrhenius law, none of the existing parameter estimators is able to deal with them in an efficient way and, in spite of many attempts, the problem was open for many years. To establish our result we propose a novel procedure to obtain a suitable {\em nonlinearly parameterized}  regression equation and introduce a {\em radically new} estimation algorithm---derived applying the Immersion and Invariance methodology---that is applicable to these regression equations. A further contribution of the paper is that parameter convergence is guaranteed with weak excitation requirements.   
\end{abstract}
%
\begin{keyword}
Parameter estimation,Least squares identification algorithm Nonlinear regression model, Exponentially convergent identification, Immersion and Invariance
\end{keyword}

\end{frontmatter}

\section{Introduction}
\lab{sec1}
%
The non-adiabatic continuous stirred tank reactor 

\noindent (CSTR) is a common chemical and biochemical system in the process industry, and it is described extensively in \cite{BASDOCbook,SEBEDGMELbook}. To comply with the modern stringent monitoring and control requirements it is necessary to dispose of a reliable model, see \cite{DOC} for a tutorial on their control and parameter estimation. Unfortunately, the dynamics of CSTRs is described by differential equations with highly uncertain parameters and, in particular, containing parameter-dependent {\em exponential terms} appearing in Arrhenius law that describes the behavior of the reaction rate. Since these nonlinearities are {\em not separable}, that is, they cannot be expressed as a product of a function of measurable signals and a function depending only on the parameters, none of the existing parameter estimation techniques is applicable to them. See \cite{ORTetalifac23} for a recent review of the existing approaches to solve this kind of problems. 

The main contribution of this paper is to provide the first globally convergent estimator for the parameters of a CSTR assuming {\em known} only the kinetic constant appearing in Arrhenius law. Towards this end, we introduce a novel procedure to obtain a suitable nonlinearly parameterized  regression equation (NLPRE) and---applying the Immersion and Invariance (I\&I) methodology \cite{ASTKARORTbook}--- propose a {\em radically new} estimation algorithm applicable to these regression equations. An additional contribution of the paper is the fact that parameter convergence is guaranteed with extremely weak excitation requirements---namely {\em interval exictation}  \cite{KRERIE}.   

We underscore the fact that to establish our result we do not assume that the parameters leave in known compact sets, that the nonlinearities satisfy some Lipschitzian properties, nor rely on injection of high-gain---via sliding modes of the use of fractional powers---or the use of complex, computationally demanding methodologies. Instead, we propose to design a classical on-line estimator whose dynamics is described by an ordinary differential equation given in a compact precise form.\\

\noindent {\bf Notation.} $\rea_+$ and  $\intnum_+$ denotes the positive real and integer numbers, respectively, and $\rea^n_+$   the set of $n$-dimensional vectors whose elements are all positive.  For a column vector $a=\col(a_1,a_2,\dots,a_n) \in \rea^n$, we denote $|a|^2:=a^\top a$ and define $a_{i,j}:=\col(a_i,a_{i+1},\dots,a_j)$, for $i,j \in \intnum_+$, with $j>i$. For a matrix $A \in \rea^{n \times m}$ we use $\|A\|$ for its Euclidean norm and denote its elements as $A_{ij}$. The action of an LTI filter $\mathcal H(p) \in \rea(p)$, with $p$ the derivative operator, {\em i.e.}, $p(u)=:{du(t)\over dt}$, on a signal $u(t)$ is denoted as $\mathcal H(p)(u)$. To simplify the notation, the arguments of all functions and mappings are written only when they are first defined and are omitted in the sequel.

%
\section{System Model and Main Result}
\lab{sec2}
%
We consider in the paper a non-isothermal CSTR  with one reactant for which mass and energy balance considerations lead to the following differential equations \cite{DOC,MATSIM}:
\begsubequ
\lab{sys}
\begali{
\lab{dotca}
\dot C_A & = {q\over V}(C_{in} - C_A) - k_0 e^{-{E\over RT}}C_A \\
\lab{dott}
\dot T & = {q\over V}(T_{in} - T) - {\Delta H\over \rho C_p}k_0 e^{-{E\over RT}}C_A   +{h A \over \rho C_p V} (T_w- T)
}
\endsubequ
where the system states $C_A(t) \in \rea_+$ and $T(t) \in \rea_+$ are the concentration of the product and temperature, respectively. $T_w(t) \in \rea_+$ is the heat exchanger temperature, that is an {\em input} signal and $T_{in}(t) \in \rea_+$ is the influent temperature. The  positive parameters\footnote{The parameter $\Delta H$ is negative for exothermic reactions, but to simplify the notation we preserve the qualifier positive for all the parameters.} 
$$
\{E, R, q, V, C_{in}, \Delta H, \rho, C_p, h, A\},
$$
whose physical meaning may be found in \cite{DOC,SEBEDGMELbook}, are {\em unknown}. Notice that the only parameter assumed known is $k_0$, which is  the kinetic constant appearing in Arrhenius law.\footnote{It is argued in \cite{SIMCSTR} that, in practice, it is possible to  estimate this coefficient from bench scale experiments.} 

We find convenient to rewrite the model in the more compact form
\begsubequ
\lab{newsys}
\begali{
	\lab{x1}
	\dot C_A & = \theta_1 -\theta_2 C_A -k_0 e^{-{\theta_5 \over T}}C_A \\
	\lab{x2}
	\dot T & = \theta_2 (T_{in}-T) -\theta_3 k_0 e^{-{\theta_5 \over T}}C_A +\theta_4 u
}
\endsubequ
where we defined
$$
\theta :=\col\left( {q\over V}C_{in}, {q\over V},{\Delta H\over \rho C_p},{h A \over \rho C_p V}, {E\over R}\right) \in \rea^5_+,
$$
and the new input
$$
u := T_w- T.
$$

Our task is to generate a {\em globally convergent  estimate} of the parameters $\theta$. To provide a solution to this problem we need the following.

\begin{assumption}\em
\lab{ass1}
The system states $C_A$, $T$ and the influent temperature $T_{in}$  are {\em measurable}.
\end{assumption}
 
 The main result of the paper is contained in the following proposition.
 
\begin{proposition}\em
\lab{pro1}
Consider the CSTR model \eqref{newsys} verifying Assumption \ref{ass1}. There exists an {\em on-line parameter estimator} of the form
\begalis{
	\dot \chi &= F_\chi(\chi,C_A,T,T_{in},u)\\
	\hat \theta &=H_\chi(\chi,C_A,T,T_{in},u)
}
with $\chi(t) \in \rea^{n_\chi}$ such that we ensure {\em global exponential convergence} of the estimated parameters under a weak excitation assumption. That is, for all $(C_A(0),T(0)) \in \rea_+^2,\;\chi(0) \in \rea^{n_\chi}$ and all continuous $u(t) \in \rea$ that generates a bounded state trajectory we ensure
\begequ
\lab{concon}
\liminf |\tilde \theta(t)| = 0,\quad (exp.)
\endequ
where $\tilde \theta:=\hat \theta - \theta$ is the parameter estimation error, with all signals remaining bounded.
\end{proposition}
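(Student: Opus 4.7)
The plan is to combine two ingredients: first, transform the state-space model \eqref{newsys} into a nonlinearly parameterized regression equation (NLPRE) in which the exponential Arrhenius dependence on $\theta_5$ is isolated from the rest of the unknowns, and second, construct an I\&I estimator tailored to this structure whose error dynamics is globally exponentially stable under merely interval excitation. Throughout I will rely on the measurability of $C_A, T, T_{in}, u$ granted by Assumption \ref{ass1}, and use stable first-order LTI filters $\mathcal{H}(p) = \lambda/(p+\lambda)$ to avoid explicit time differentiation, noting that $p/(p+\lambda)$ acting on any measurable signal produces a measurable signal.

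A natural first step is to exploit the fact that the Arrhenius term $k_0 e^{-\theta_5/T}C_A$ appears in both \eqref{x1} and \eqref{x2}, multiplied by $1$ and $\theta_3$ respectively. Multiplying \eqref{x1} by $\theta_3$ and subtracting from \eqref{x2} eliminates the non-separable nonlinearity and yields
\[
\dot T - \theta_3 \dot C_A \;=\; \theta_2(T_{in}-T) \,+\, \theta_3\theta_2 C_A \,-\, \theta_3\theta_1 \,+\, \theta_4 u .
\]
Filtering both sides through $\mathcal{H}(p)$ removes the derivatives and produces a classical linear regression $z = \psi^\top \eta$ with measurable $z,\psi$ and overparameterized unknown $\eta := \col(\theta_2,\theta_3,\theta_4,\theta_3\theta_1,\theta_3\theta_2)$. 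A least-squares or gradient algorithm on this regression, possibly combined with a parameter-decomposition step of DREM type, yields global exponential convergence of $\hat\eta$ under interval excitation of $\psi$; the original $\theta_{1:4}$ are then recovered algebraically (with the built-in redundancy serving as consistency).

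To estimate the remaining parameter $\theta_5$, solve \eqref{x1} for the Arrhenius term,
\[
k_0 e^{-\theta_5/T} \;=\; \frac{1}{C_A}\bigl(\theta_1-\theta_2 C_A -\dot C_A\bigr),
\]
which, after filtering and once $\theta_1,\theta_2$ are available, is expressed entirely in measurable signals. Taking logarithms converts this into the scalar linear regression $\theta_5 \cdot (1/T) = \ln k_0 - \ln(\,\cdot\,)$, trivially amenable to least squares. The full estimator of the proposition is obtained by packaging the two stages into a single dynamical system $\dot\chi = F_\chi$, $\hat\theta = H_\chi(\chi,\cdot)$, following the I\&I recipe: write $\hat\theta = \beta(C_A,T,\cdot) + z$ and design $\dot z$ so that the manifold $\{\hat\theta = \theta\}$ is rendered attractive and invariant, with the non-separable exponential handled precisely by the logarithmic change of coordinates above.

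The main technical obstacle is obtaining globalness under only interval (as opposed to persistent) excitation, while simultaneously ensuring that the logarithmic step is well-posed along all trajectories; this requires that the reconstructed $k_0 e^{-\theta_5/T}$ remain strictly positive, a property that may fail transiently because $\hat\theta_{1:4}$ are still converging. Overcoming this will demand a careful cascade analysis, most likely decoupling the two stages through a DREM-type memory that stores the linear regressors until a sufficient interval of excitation has been accumulated, and then invoking the I\&I off-the-manifold dynamics to propagate exponential convergence from the linear stage to the nonlinear one while preserving boundedness of all signals. Verifying that the composite $\chi$-dynamics takes the compact ODE form claimed in the proposition, with no high-gain, sliding-mode or fractional-power injection, is what makes the construction \emph{radically new} compared with existing NLPRE estimators.
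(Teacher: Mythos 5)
Your first stage coincides with the paper's: eliminating the Arrhenius term by forming $\theta_3\dot C_A-\dot T$, filtering with $\lambda/(p+\lambda)$, and estimating the overparameterized vector $\col(\theta_1\theta_3,\theta_2,\theta_3,\theta_4,\theta_2\theta_3)$ by LS+DREM under interval excitation is exactly Subsection \ref{subsec31} (the only detail you omit is why the algebraic recovery $\hat\theta_1=\hat\eta_1/\hat\eta_3$ never divides by zero, which the paper gets from the monotonicity of the DREM estimation errors).

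Your second stage, however, has a genuine gap. You propose to isolate $k_0e^{-\theta_5/T}=\tfrac{1}{C_A}(\theta_1-\theta_2C_A-\dot C_A)$ and take logarithms ``after filtering.'' But $\dot C_A$ is not measurable, so the right-hand side is only accessible through a filtered version of \eqref{x1}; and the filter does not commute with the time-varying nonlinearity, i.e.
\begin{equation*}
\frac{\lambda}{p+\lambda}\left(e^{-\theta_5/T}C_A\right)\;\neq\; e^{-\theta_5/T}\cdot\frac{\lambda}{p+\lambda}(C_A),
\end{equation*}
so after filtering the exponential can no longer be isolated pointwise and the logarithm cannot be applied. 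This is precisely the ``non-separability'' obstruction that the paper identifies as the reason all standard techniques fail, and your construction assumes it away rather than resolving it. (The positivity problem you flag is secondary; the reconstruction itself is unavailable.) The paper's actual resolution is different in kind: it works with the temperature equation \eqref{x2} written as $\dot T=\zeta_1-\zeta_2e^{\theta_5\phi}$, $\phi=-1/T$, and builds an I\&I estimator $\hat\theta_5=\gamma_b\ln T+\rho_I$ whose proportional term has derivative $\gamma_b\dot T/T=-\gamma_b\dot T\phi$; this lets the unmeasurable $\dot T$ cancel in the off-the-manifold dynamics, yielding the error equation $\dot{\tilde\theta}_5=m\,(e^{-\tilde\theta_5\phi}-1)\phi$ with $m>0$, which is shown to be globally attractive to zero via $V=\tfrac12\tilde\theta_5^2$ and the elementary inequality $z(e^{-z}-1)<0$ for $z\neq0$ --- no logarithm of a reconstructed signal, no excitation condition on this stage, and only a perturbation argument to pass from known to estimated $\theta_{2,4}$. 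Note also that this stage delivers asymptotic, not exponential, convergence of $\tilde\theta_5$, so your claim of propagating exponential rates through the cascade overstates what even the paper establishes for the fifth parameter.
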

%
\section{Proof of Proposition \ref{pro1}}
\lab{sec3}
%
The proof proceeds in two steps, which are described in the following two subsections. First, the construction of an {\em overparameterized} linear regression equation (LRE) for the parameters $\theta_{1,4}$ and their  consistent estimation with the least-squares plus dynamic regressor extension (LS+DREM) estimation algorithm reported in \cite{ORTROMARA,PYRetal}. Second, the creation of a NLPRE for the remaining parameter $\theta_5$ and their estimation with a {\em radically new} estimation procedure based on the I\&I methodology.
\subsection{Estimation of $\theta_{1,4}$} 
\lab{subsec31}
%
First, we eliminate the exponential term combining equations \eqref{x1} and \eqref{x2} as
\beal{
\theta_3 \dot C_A - \dot T = \theta_1\theta_3 - \theta_2\theta_3 C_A -\theta_2 (T_{in}-T)+ \theta_4(- u),
}
that we rearrange as
$$
- \dot T = \theta_1\theta_3+\theta_2 (T-T_{in})+\theta_3 (-\dot C_A)+ \theta_4 (-u) + \theta_2\theta_3 (-C_A).
$$

Fix a constant $\lambda>0$ and apply the LTI, first order filter ${\lambda \over p+\lambda}$, with $p:={d\over dt}$, to the previous equation to get the following  overparameterized LRE
\begali{
\lab{lre}
	y & = \eta^\top \varphi +\et
}
where  we defined
$$
y := {\lambda p\over p+\lambda}(-T),
$$
the parameter vector 
\begali{
\lab{eta}
\eta &:=\col(\theta_1\theta_3,\theta_2,\theta_3,\theta_4,\theta_2\theta_3) \in \rea^5_+,
}
and the regressor vector signal
{\scriptsize
\begali{
\lab{varphi}
\varphi &:=\col\left((1-\et),{\lambda \over p+\lambda} (T-T_{in}), {\lambda p\over p+\lambda}(-C_A),{\lambda \over p+\lambda} (-u),{\lambda \over p+\lambda} (-C_A) \right)
}
}
where $\et$ is an exponentially decaying signal, which is neglected in the sequel. 

To estimate the parameters of the LRE \eqref{lre} we impose the {\em necessary} assumption that it is {\em identifiable}  \cite{GOOSINbook}. That is, that there exists a set of time instants---$\{t_i\}, i=1\dots, 5,\;t_i \in \reapos$, such that
$$
\rank\Big\{\begmat{\varphi(t_1)|\varphi(t_2)|&\cdots&|\varphi(t_5)}\Big\}=5.
$$

We recall the following result of \cite{WANORTBOB}.

\begin{lemma}\em
\lab{lem1}
The LRE \eqref{lre} is identifiable {\em if and only if} the regressor vector $\varphi$ is interval exciting (IE)  \cite{KRERIE}. That is,  {there exist} constants $c_c>0$ and $t_c>0$ such that
$$
\int_0^{t_c} \varphi(s) \varphi^\top(s)  ds \ge c_c I_5.
$$
\end{lemma}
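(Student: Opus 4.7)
The plan is to reformulate both conditions in terms of the integrated Gram matrix $\Sigma_{t_c} := \int_0^{t_c} \varphi(s)\varphi^\top(s)\,ds$, which is always symmetric positive semi-definite and satisfies $v^\top \Sigma_{t_c} v = \int_0^{t_c} (v^\top \varphi(s))^2 \, ds$. The key observation I will use is that since $\varphi$ is continuous---the stable LTI filters in \eqref{varphi} applied to continuous signals produce continuous outputs---this integral vanishes iff $v^\top \varphi(s) \equiv 0$ on $[0, t_c]$. Hence IE is equivalent to the existence of $t_c > 0$ for which $\Sigma_{t_c}$ has trivial kernel, with $c_c$ taken to be its smallest eigenvalue.

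For the implication IE $\Rightarrow$ identifiability I would argue by contradiction: if no choice of five time instants gave linearly independent samples of $\varphi$, then the image $\{\varphi(t) : t \in [0,t_c]\}$ would lie in a proper subspace of $\rea^5$, and any nonzero $v$ orthogonal to that subspace would satisfy $v^\top \Sigma_{t_c} v = 0$, contradicting $\Sigma_{t_c} \ge c_c I_5 > 0$. In particular we can extract five instants in $[0,t_c]$ at which $\varphi$ spans $\rea^5$.

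For the converse (identifiability $\Rightarrow$ IE), I would pick $t_1 < \cdots < t_5$ such that $\varphi(t_i)$ are linearly independent and set $t_c > t_5$. Given any nonzero $v$, at least one $v^\top \varphi(t_i)$ is nonzero (otherwise the $\varphi(t_i)$'s would be linearly dependent), and by continuity $(v^\top\varphi)^2$ stays above a positive constant on a neighborhood of that $t_i$, giving $v^\top \Sigma_{t_c} v > 0$. Since $v \mapsto v^\top \Sigma_{t_c} v$ is continuous on the unit sphere of $\rea^5$, which is compact, it attains a positive minimum $c_c > 0$, yielding IE.

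The main obstacle I foresee is the justification of continuity of $\varphi$: the biproper block ${\lambda p\over p+\lambda}$ acting on $-C_A$ must be handled via its realization (rewriting as $\lambda - {\lambda^2 / (p+\lambda)}$) to see that continuous $C_A$ yields a continuous output, and one must also appeal to continuity of the trajectories of \eqref{newsys} for $C_A$ and $T$, together with the standing assumption that $T_{in}$ and $u$ are continuous. Once continuity of $\varphi$ is in hand, the remainder is the standard linear-algebraic equivalence between the pointwise rank condition at five instants and the integrated positivity condition on $[0,t_c]$.
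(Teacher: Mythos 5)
Your proof is correct, but note that the paper does not actually prove Lemma \ref{lem1}: it is simply quoted from the cited reference \cite{WANORTBOB}, so there is no in-paper argument to compare against. Your self-contained proof supplies exactly the standard linear-algebraic bridge one would expect: both conditions are restated in terms of the Gram matrix $\Sigma_{t_c}=\int_0^{t_c}\varphi\varphi^\top ds$, whose kernel (for continuous $\varphi$) consists precisely of the vectors $v$ with $v^\top\varphi\equiv 0$ on $[0,t_c]$; failure of the rank condition at every choice of five instants is equivalent to the image of $\varphi$ lying in a proper subspace (a spanning set contains a basis), and the converse direction correctly upgrades pointwise nonvanishing of $v^\top\varphi$ to a uniform lower bound via continuity in $t$ and compactness of the unit sphere in $v$. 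Your attention to the continuity of $\varphi$ is well placed: the biproper blocks ${\lambda p\over p+\lambda}$ act on the continuous state trajectories $C_A$ and $T$, while the possibly discontinuous exogenous signals $T_{in}$ and $u$ only enter through the strictly proper block ${\lambda\over p+\lambda}$, so every component of $\varphi$ is indeed continuous. The only cosmetic point worth tightening is that the paper's identifiability definition requires $t_i>0$, so if your extracted spanning instants include $t=0$ you should perturb that instant slightly (linear independence is an open condition), or equivalently observe that the closed span of $\{\varphi(t):t\in(0,t_c]\}$ coincides with that over $[0,t_c]$.
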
 

The estimator we propose below consistently estimates the parameters $\eta_{1,4}$ of the LRE \eqref{lre}. It is clear from \eqref{eta} that this yields the estimates of $\theta_{2,4}$, which are equal to $\eta_{2,4}$. On the other hand, to estimate $\theta_1$ we need to compute it via
\begequ
\lab{hatthe1}
 \hat \theta_1={\hat \eta_1 \over \hat \eta_3}.
\endequ
It will be shown below that the monotonicity property of each parameter estimation error of DREM estimators avoids the possibility of a division by zero in \eqref{hatthe1}.
 
We are in position to present the main result of the subsection whose proof relies on the following.

\begin{assumption}\em
\lab{ass2}
The regressor $\varphi$ given in \eqref{varphi} is IE.
\end{assumption}

\begin{proposition}\em
\lab{pro2}
		Consider the LRE \eqref{lre} verifying Assumption \ref{ass2}. Define the LS+DREM interlaced estimator  with time-varying forgetting factor
		\begalis{
			\dot{\hat \mu  }   & =\alpha F   \varphi   (y  -\varphi^\top    \hat\mu   ),\; \hat\mu  (0)=:\mu ^0  \in \rea^5\\
		\dot {F}  & = -\alpha F  \varphi \varphi^\top    F {+ \beta F },\;F(0)={1 \over f_0} I_5\\
\dot{\hat \eta}_{1,4}   & =\gamma_a \Delta  (\caly_{1,4}  -\Delta  \hat\eta_{1,4}),\; \hat\eta_{1,4}(0)=:\eta^0_{1,4} \in \rea^4_+\\
		\dot z&=-\beta z,\;z(0)=1,
}
where we defined
		\begalis{
		\beta&:=\beta_0\Big(1-{\|F\| \over M}\Big)\\
			\Delta  & :=\det\{I_5- z  f_0F\}\\
			\caly  & := \adj\{I_5- z  f_0F\}[\hat\mu - z f_0 F \mu^0 ],
		}	
with tuning gains $\alpha>0$, $f_0>0,\; {\beta_0 > 0},\;M \geq {1 \over f_0}$ and $\gamma_a >0$. Define the parameter estimates $\hat\theta_{2,4}=\hat\eta_{2,4}$ and   $\hat \theta_1$ given by \eqref{hatthe1}.   Then, for all $f_0>0$,  $\eta_1^{0}$ and $\theta_{2,4}^{0}$, we have that 
$$
\liminf |\tilde \theta_{1,4}(t)|=0,\quad (exp.)
$$
with all signals {\it bounded}, where $\tilde \theta_{1,4}:=\hat \theta_{1,4}- \theta_{1,4}$ is the estimation error vector. 
\end{proposition}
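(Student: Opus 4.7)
\textbf{Proof plan for Proposition \ref{pro2}.} My approach would follow the LS+DREM template from \cite{ORTROMARA,PYRetal}: first run a least-squares estimator with time-varying forgetting factor on the full overparameterized LRE \eqref{lre} to produce an auxiliary estimate $\hat\mu$ of the five-dimensional vector $\eta$, then use the dynamic extension to synthesize five \emph{scalar} LREs with a common regressor $\Delta$, and finally apply an elementwise gradient descent on them to obtain $\hat\eta$. The estimates of $\theta_{2,4}$ then coincide with the corresponding entries of $\hat\eta$, while $\hat\theta_1$ is recovered via the quotient \eqref{hatthe1}.

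The central algebraic step is the following identity, which I would establish first. Differentiating $F^{-1}\tilde\mu$ along the trajectories and exploiting the cancellation $\dot F^{-1}=\alpha\varphi\varphi^\top-\beta F^{-1}$ together with $\dot{\tilde\mu}=-\alpha F\varphi\varphi^\top\tilde\mu$, one obtains
\begali{
\frac{d}{dt}\bigl(F^{-1}\tilde\mu\bigr)=-\beta\,F^{-1}\tilde\mu,
}
whose solution is $F^{-1}(t)\tilde\mu(t)=z(t)f_0\tilde\mu(0)$. Rearranging gives $(I_5-zf_0 F)\eta=\hat\mu-zf_0 F\mu^0$. Premultiplying by $\adj\{I_5-zf_0 F\}$ and using $\adj(A)A=\det(A)I$ delivers exactly the scalar LREs $\caly_i=\Delta\,\eta_i$, $i=1,\dots,5$, with $\Delta=\det\{I_5-zf_0 F\}$. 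Substituting into the update $\dot{\hat\eta}_i=\gamma_a\Delta(\caly_i-\Delta\hat\eta_i)$ yields the closed-form error dynamics $\dot{\tilde\eta}_i=-\gamma_a\Delta^2\tilde\eta_i$, hence
\begali{
\tilde\eta_i(t)=\tilde\eta_i(0)\exp\!\Bigl(-\gamma_a\!\int_0^t\!\Delta^2(s)\,ds\Bigr),
}
which is the key monotonicity property: each $|\tilde\eta_i|$ is nonincreasing and the sign of $\tilde\eta_i$ is preserved.

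The main obstacle, and the one I would address next, is showing that under the interval-excitation Assumption \ref{ass2} the scalar regressor $\Delta(t)$ is not identically zero and in fact satisfies $\int_0^\infty\!\Delta^2\,ds=\infty$, delivering the announced exponential rate. For this I would bound $F^{-1}$ from below using the variation-of-constants representation
\begali{
F^{-1}(t)=z(t)\Bigl[f_0 I_5+\alpha\!\int_0^t\!\frac{\varphi\varphi^\top}{z(s)}\,ds\Bigr],
}
which under IE yields $F^{-1}(t)/z(t)\ge (f_0+\alpha c_c')I_5$ for $t\ge t_c$, so that the eigenvalues of $zf_0 F$ are bounded away from $1$ and $\Delta$ is bounded away from zero after the excitation window. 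The projection-like forgetting factor $\beta=\beta_0(1-\|F\|/M)$ plays a double role here: it prevents covariance wind-up (guaranteeing $\|F\|\le M$ and hence all signals of the estimator remain bounded for bounded states of \eqref{newsys}), while still enabling the growth of $F^{-1}/z$ needed for $\Delta$ to remain nonvanishing.

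Finally, I would close the argument by recovering $\hat\theta$ from $\hat\eta$. For $\theta_{2,4}=\eta_{2,4}$ there is nothing to do: exponential convergence of $\tilde\eta_{2,4}$ to zero transfers verbatim. For $\theta_1=\eta_1/\eta_3$ I would invoke the monotonicity property derived above: since $\hat\eta_3(0)>0$ and $\eta_3>0$, the sign-preserving exponential decay of $\tilde\eta_3$ forces $\hat\eta_3(t)\ge\min\{\hat\eta_3(0),\eta_3\}>0$ for all $t\ge 0$, ruling out division by zero in \eqref{hatthe1} and ensuring that $\hat\theta_1=\hat\eta_1/\hat\eta_3$ inherits the exponential convergence of $\hat\eta_{1,3}$. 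Combining these facts gives $\lim_{t\to\infty}|\tilde\theta_{1,4}(t)|=0$ exponentially, with all internal signals bounded, as claimed.
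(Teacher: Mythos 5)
Your proposal is correct and follows essentially the same route as the paper: the paper's own proof simply invokes Lemma \ref{lem1} together with \cite[Corollary 1]{ORTROMARA} for convergence and appeals to the monotonicity of the DREM errors to rule out division by zero in \eqref{hatthe1}. What you have written is a faithful reconstruction of that cited argument --- the identity $\tfrac{d}{dt}(F^{-1}\tilde\mu)=-\beta F^{-1}\tilde\mu$, the scalar LREs $\caly_i=\Delta\eta_i$, the error dynamics $\dot{\tilde\eta}_i=-\gamma_a\Delta^2\tilde\eta_i$, the lower bound on $\Delta$ under IE, and the sign-preservation argument for $\hat\eta_3$ --- so no further changes are needed.
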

\begin{proof}
The proof of parameter convergence follows immediately from Lemma \ref{lem1} and \cite[Corollary 1]{ORTROMARA}. The proof that $\hat \theta_3(t) > 0$ to avoid a singularity in the calculation of \eqref{hatthe1} follows from the fact that the parameter errors verify the {\em monotonicity} condition
$$
|\tilde \theta_i(t_a)| \leq |\tilde \theta_i(t_b)|,\;\forall\, t_a \geq t_b \geq 0,\;i=2,3,4.
$$
\end{proof}
\subsection{Estimation of $\theta_5$ assuming known $\theta_{2,4}$} 
\lab{subsec32}
%
As in the previous subsection we first need to express  the parameter  $\theta_5$ in a suitable equation, which is now nonlinearly parameterized. Towards this end, we find convenient to rewrite   \eqref{x2} in the compact form
\begequ
\lab{newregequ}
\dot T = \zeta_1 - \zeta_2 e^{\theta_5  \phi},
\endequ
with the definitions
\begalis{
\zeta_1 & := \theta_2 (T_{in}-T)  +\theta_4 u\\
\zeta_2 & := \theta_3 k_0C_A\\
\phi &:= -{1 \over T}.
}

We propose in the following lemma a {\em completely new} estimator for the parameter $\theta_5$ of equation \eqref{newregequ}, which is obtained applying the I\&I methodology \cite{ASTKARORTbook}. For the sake of clarity, we present first a lemma for the {\em ideal} case where we assume that the parameters  $\theta_{2,4}$ are known---denoting the signals of this estimator with $(\cdot)^\star$. Then, as a corollary we define the actual $\hat \theta_5$ that uses the signals $\hat \theta_{2,4}$ and, carrying a simple perturbation analysis to the scheme of the lemma, prove its convergence.

\begin{lemma}\em
\lab{lem2}
Consider the equation \eqref{newregequ} assuming known the parameters $\theta_{2,4}$---hence the signals $\zeta_1$ and $\zeta_2$ are {\em measurable}. Following the I\&I methodology define the estimate of $\theta_5$ as the sum of a proportional and an integral term as
\begequ
\lab{hatrho}
{\hat\theta_5}^\star = {\rho }_P(T) +{\rho }_I ^\star,
\endequ
where 
\begsubequ
\lab{pi}
\begali{
	\lab{rho}
	{\rho }_P(T) & = \gamma_b \ln(T) \\
	\lab{rhoi}
	{\dot \rho }^\star_I & = \gamma_b \left( \zeta_1 -  \zeta_2 e^{ {\hat\theta_5}^\star \phi}\right)\phi
}
\endsubequ
with $\gamma_b$ a tuning gain such that $\gamma_b \sign(\theta_3) >0$.\footnote{This requirement is imposed because, as mentioned in Footnote 1, the parameter $\theta_3$ is negative for exothermic reactions and, in the sequel, we will require $\gamma_b \theta_3 >0$.} Then, for all initial conditions $\rho  ^\star_I(0) \in \rea$ we have that  all signals are bounded and
$$
\liminf  |{\tilde\theta_5^\star}(t)|=0,
$$ 
where ${\tilde\theta_5^\star}:={\hat\theta_5^\star} -\theta_5 $ is the parameter error of the ideal estimator. 
\end{lemma}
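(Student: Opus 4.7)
The plan is to derive the scalar error dynamics for $\tilde\theta_5^\star$ and show they reduce to a sign-definite gradient-type flow amenable to a direct Lyapunov argument. The I\&I construction is designed so that $\rho_P(T)=\gamma_b\ln T$ contributes $-\gamma_b\phi\,\dot T$ upon differentiation, and this term is precisely what is needed, in combination with \eqref{rhoi} and the plant equation \eqref{newregequ}, to eliminate the unmeasurable $\dot T$ together with the $\zeta_1\phi$ contribution, leaving only the mismatch between the two exponentials.

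Carrying this out, I would differentiate \eqref{hatrho} using $\phi=-1/T$ and substitute \eqref{newregequ} to obtain
\begin{align*}
\dot{\tilde\theta_5^\star}
&= -\gamma_b\phi\,\dot T + \gamma_b(\zeta_1-\zeta_2 e^{\hat\theta_5^\star\phi})\phi \\
&= -\gamma_b\,\zeta_2\,\phi\,\bigl(e^{\hat\theta_5^\star\phi}-e^{\theta_5\phi}\bigr).
\end{align*}
Applying the mean-value identity $e^{a}-e^{b}=(a-b)\int_0^1 e^{sa+(1-s)b}\,ds$ with $a=\hat\theta_5^\star\phi$, $b=\theta_5\phi$ factors out $\tilde\theta_5^\star$ and yields
\begin{align*}
\dot{\tilde\theta_5^\star} = -\gamma_b\,\zeta_2\,\phi^2\, w(t)\,\tilde\theta_5^\star,\quad w(t):=\int_0^1 e^{[s\hat\theta_5^\star+(1-s)\theta_5]\phi}\,ds>0.
\end{align*}

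With $V=\tfrac12(\tilde\theta_5^\star)^2$ this gives $\dot V=-\gamma_b\zeta_2\phi^2 w(t)(\tilde\theta_5^\star)^2$. Since $\zeta_2=\theta_3 k_0 C_A$ and the hypothesis $\gamma_b\,\sign(\theta_3)>0$, combined with $k_0>0$ and the physical positivity $C_A>0$, force $\gamma_b\zeta_2>0$, we conclude $\dot V\le 0$. Hence $|\tilde\theta_5^\star|$ is bounded and monotonically non-increasing, and boundedness of $\rho_I^\star=\hat\theta_5^\star-\gamma_b\ln T$ then follows from the assumed boundedness of $T$.

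For the limit $\lim_{t\to\infty}|\tilde\theta_5^\star(t)|=0$, I would exploit the monotonicity: letting $c:=\lim_{t\to\infty}|\tilde\theta_5^\star(t)|\ge 0$ and using $(\tilde\theta_5^\star(t))^2\ge c^2$ for all $t$, integration of $\dot V$ produces
\begin{align*}
V(0)-V(\infty)\;\ge\; c^2\int_0^\infty\gamma_b\,\zeta_2(t)\,\phi^2(t)\,w(t)\,dt,
\end{align*}
so non-integrability of the positive coefficient $\gamma_b\zeta_2\phi^2 w$ on $[0,\infty)$ forces $c=0$. The main obstacle lies precisely in justifying this last non-integrability step: although $w(t)>0$ and $\zeta_2\phi^2>0$ pointwise, a \emph{uniform} strictly positive lower bound requires $C_A$ bounded away from zero and $T$ bounded both above and away from zero along trajectories---structural positivity properties of the CSTR that must be invoked from the physics of \eqref{newsys} beyond the bare bounded-trajectory hypothesis stated in the proposition.
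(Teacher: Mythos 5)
Your derivation of the error dynamics and the Lyapunov argument is essentially the paper's proof. The paper arrives at the same equation $\dot{\tilde\theta}_5^\star=\gamma_b\zeta_2\phi\,(e^{\theta_5\phi}-e^{\hat\theta_5^\star\phi})$, but establishes sign-definiteness of $\dot V$ by factoring $e^{\theta_5\phi}-e^{\hat\theta_5^\star\phi}=e^{\hat\theta_5^\star\phi}(e^{-\tilde\theta_5^\star\phi}-1)$ and invoking the elementary inequality $z(e^{-z}-1)<0$ for $z\neq 0$ with $z=\tilde\theta_5^\star\phi\neq 0$, whereas you use the mean-value identity to extract the linear factor $-\gamma_b\zeta_2\phi^2 w(t)\tilde\theta_5^\star$; the two are algebraically equivalent and both rely on $\gamma_b\zeta_2>0$, i.e.\ on $\gamma_b\theta_3>0$ and $C_A>0$. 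Where you genuinely diverge is at the last step: the paper simply concludes from $\dot V<0$ for $\tilde\theta_5^\star\neq 0$ that the proof is complete, while you correctly observe that for a time-varying damping coefficient pointwise negativity of $\dot V$ only gives monotone boundedness of $|\tilde\theta_5^\star|$, and that the limit $\lim_{t\to\infty}|\tilde\theta_5^\star(t)|=0$ additionally requires non-integrability of $\gamma_b\zeta_2\phi^2 w$, i.e.\ $C_A$ bounded away from zero and $T$ bounded above and away from zero along the trajectory. That ``gap'' you flag is not a defect of your argument relative to the paper's --- it is present, unacknowledged, in the paper's own proof --- and your explicit identification of the structural positivity properties of the CSTR needed to close it is a strengthening rather than a deficiency.
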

\begin{proof}
The dynamics of the parameter error   is given by
\begalis{
\dot{\tilde\theta}_5^\star &=\dot{\hat\theta}_5^\star 
\\
&= \dot{\rho  }_P + {\dot \rho }^\star_I
\\
&= \gamma_b{1 \over T} \dot T  +  \gamma_b \left( \zeta_1 -  \zeta_2 e^{\hat \theta _5^\star \phi}\right)\phi\\
&= -\gamma_b  ( \zeta_1 - \zeta_2 e^{\theta_5  \phi})\phi +  \gamma_b \left( \zeta_1 -  \zeta_2 e^{\hat \theta_5^\star \phi}\right)\phi\\
&= \gamma_b \zeta_2 \left( e^{\theta_5  \phi} - e^{\hat \theta _5^\star \phi} \right)\phi\\
&= \gamma_b \zeta_2 \left( e^{(\hat \theta _5^\star - \tilde \theta _5^\star)\phi} - e^{\hat \theta_5^\star \phi} \right)\phi\\
&= \gamma_b \zeta_2 e^{\hat \theta _5^\star \phi} \left( e^{- \tilde \theta_5^\star \phi} - 1 \right)\phi\\
&=: m \left( e^{- \tilde \theta_5^\star \phi} - 1 \right)\phi,
}
where we defined the signal $m(t) \in \rea_+$, that satisfies 
$$
m(t):=\gamma_b \zeta_2(t) e^{\hat \theta_5^\star(t) \phi(t)}>0,\;\forall t \geq 0.
$$ 

To analyze the stability of the error equation consider the Lyapunov function $V(\tilde \theta_5^\star)=\hal (\tilde\theta_5^\star)^2$, whose derivative yields
\begalis{
\dot V &= m \left( e^{- \tilde \theta_5^\star \phi} - 1 \right)\tilde\theta_5^\star \phi< 0,\quad \forall \; \tilde\theta_5^\star \neq 0.
}
The claim of negativity stems from the facts that the function 
$$
f(z):=z(e^{-z}-1)<0,\; \forall z\neq 0,
$$
and that $\phi(t)=-{1 \over T(t)} \neq 0,\;\forall t \geq 0$. This completes the proof.
\end{proof}

\subsection{Certainty equivalent estimation of  $\theta_5$} 
\lab{subsec33}
%
In this subsection we present the actual estimator of the parameter $\theta_5$ that relies on the estimation of $\theta_{2,4}$ as explained in Proposition \ref{pro2} and the {\em ad-hoc} application of the certainty equivalent principle. As will be seen below the proof boils down to a simple perturbation analysis.
\begin{proposition}\em
\lab{pro3}
Consider the NLPRE \eqref{newregequ} and the I\&I estimator of $\theta_5$ as the sum of a proportional and an integral term as
\begequ
\lab{hatrho1}
\hat  \theta_5 = \rho  _P(T) +{\rho }_I ,
\endequ
where $\rho  _P(T)$ is given in \eqref{rho},\footnote{Notice that the proportional term $\rho  _P(T)$ is independent  of the parameters.}
\begali{
\lab{dotrhoi1}
	\dot {\rho }_I  & = \gamma_b \left(\hat \theta_2 (T_{in}-T)  + \hat \theta_4 u-  \hat \theta_3 C_A\, e^{\hat \theta_5  \phi}\right)\phi,
}
with $\gamma_b$ such that $\gamma_b \sign(\theta_3) >0$ and the estimated parameters $\hat \theta_{2,4}$ generated as indicated in Proposition \ref{pro2}. Then, for all initial conditions ${\rho }_I (0) \in \rea$ we have that all signals are bounded and the parameter error verifies
$$
\liminf |\tilde \theta_5(t)|=0. 
$$ 
\end{proposition}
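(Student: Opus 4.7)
The plan is to view the certainty-equivalent scheme of Proposition \ref{pro3} as an exponentially vanishing perturbation of the ideal estimator analysed in Lemma \ref{lem2}. First I would redo the differentiation of \eqref{hatrho1}, substituting \eqref{dotrhoi1} and the true dynamics \eqref{newregequ}, but writing $\hat\theta_i = \theta_i + \tilde\theta_i$ for $i\in\{2,3,4\}$ before collecting terms. The portion of the right-hand side that depends only on the true parameters is then expected to cancel exactly as in Lemma \ref{lem2}, leaving the estimation errors $\tilde\theta_{2,3,4}$ segregated into an additive disturbance. I expect the resulting error equation to take the form
$$
\dot{\tilde\theta}_5 \;=\; m(t)\bigl(e^{-\tilde\theta_5\phi}-1\bigr)\phi + w(t),
$$
where $m(t):=\gamma_b\theta_3 k_0 C_A e^{\hat\theta_5\phi}>0$ by the sign condition on $\gamma_b$, and the disturbance
$$
w(t):=\gamma_b\phi\bigl[\tilde\theta_2(T_{in}-T) + \tilde\theta_4 u - \tilde\theta_3 k_0 C_A e^{\hat\theta_5\phi}\bigr]
$$
gathers all the terms driven by $\tilde\theta_{2,3,4}$.

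By Proposition \ref{pro2} the errors $\tilde\theta_{2,3,4}(t)$ decay exponentially and, by hypothesis, the state and input signals remain bounded along trajectories; hence $w(t)$ will be exponentially decaying \emph{provided} $\hat\theta_5$ itself stays bounded. Reusing the Lyapunov candidate $V(\tilde\theta_5):=\hal\tilde\theta_5^2$ from Lemma \ref{lem2}, direct computation yields
$$
\dot V \;=\; m(t)\phi\tilde\theta_5\bigl(e^{-\tilde\theta_5\phi}-1\bigr) + w(t)\tilde\theta_5,
$$
whose first term is strictly negative for $\tilde\theta_5\neq 0$ by the scalar inequality $z(e^{-z}-1)<0$ already invoked in Lemma \ref{lem2}. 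Because this nominal dissipation grows super-linearly in $|\tilde\theta_5|$ through the exponential, it dominates the linear-in-$\tilde\theta_5$ perturbation $w(t)\tilde\theta_5$ outside a time-varying ball whose radius is controlled by $|w(t)|$; the exponential decay of $w$ then forces this ball to shrink to zero. A vanishing-perturbation / cascade argument along these lines should deliver first boundedness of $\tilde\theta_5$ and then $\lim_{t\to\infty}|\tilde\theta_5(t)|=0$.

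The hard part will be that the unperturbed vector field is \emph{not} globally Lipschitz in $\tilde\theta_5$---the exponential blows up as $\tilde\theta_5\phi\to-\infty$---so standard input-to-state stability theorems cannot be quoted verbatim. I would therefore first rule out finite escape by exploiting the dissipative structure of $\dot V$ directly (the super-linear stabilising term dominates for bounded $w$), and only then upgrade boundedness into asymptotic convergence via the exponential decay of $w$. A subsidiary point to watch is that $m(t)$ must keep a constant sign: this is ensured by $C_A>0$, the boundedness of $\hat\theta_5\phi$ (once boundedness of $\hat\theta_5$ is established), and the monotonicity of the DREM parameter errors invoked in the proof of Proposition \ref{pro2}, which also prevents $\hat\theta_3$ from crossing zero.
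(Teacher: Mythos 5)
Your proposal follows essentially the same route as the paper: the published proof simply rewrites $\dot{\rho}_I$ as $\dot{\rho}^\star_I$ plus the term $\begmat{T-T_{in} & -C_A e^{\hat\theta_5\phi} & u}\tilde\theta_{2,4}\,\phi$ and asserts that this perturbation is bounded and exponentially decaying, so that the conclusion follows from the argument of Lemma \ref{lem2}. You are in fact more thorough than the paper, which does not address the circularity you correctly flag---the perturbation contains $e^{\hat\theta_5\phi}$, so its boundedness presupposes boundedness of $\hat\theta_5$---nor the absence of a global Lipschitz bound on the unperturbed error dynamics; your finite-escape/dissipativity step is exactly what is needed to close these gaps in the paper's one-line perturbation argument.
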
	
\begin{proof}
From the proof of Lemma \ref{lem2} it is clear that the claim will be established if we can prove that $\dot {\rho }_I (t) \to \dot \rho  ^\star_I(t)$. Towards this end, we write \eqref{dotrhoi1} in the form
\small{
\begalis{
	\dot {\rho }_I  & = \gamma_b \left[(\theta_2 + \tilde \theta_2) (T-T_{in})  + (\theta_4+\tilde \theta_4) u  - (\theta_3 +\tilde \theta_3) C_A\, e^{\hat \theta_5 \phi}\right]\phi \\
 & =\dot \rho ^\star_I +  \begmat{ T-T_{in} & -C_A\, e^{\hat \theta_5  \phi} &  u} \tilde \theta_{2,4} \phi,
}
}
\normalsize{where the second right hand term is bounded and exponentially converges to zero. This completes the proof.}

\end{proof}
\section{Simulation Results}
\lab{sec4}
%
In this section we present simulations of the proposed estimators  using the parameters of Table \ref{tab}, which are taken from \cite{SIMCSTR}. 

The simulations where carried out under the following three considerations. 

\begenu

\item The input signal $T_w$ is chosen time varying, such that it provides  sufficient excitation to the system to ensure parameter convergence. Solid arguments about the use of sinusoidal input signals  are presented in \cite{BRAetal} for the particular case of process control systems and in \cite{SASBODbook} for general systems. 
 
\item The temperature $T_{in}$ is normally assumed constant, however in the simulations we consider---again, for excitation requirement---that it is subject to small step changes \cite{SIMCSTR}. 

\item Although in practice, one can determinate the pre--exponential non thermal factor $k_0$ from bench scale experiments \cite{SIMCSTR}, we present simulations of the estimator of Proposition \ref{pro2} considering that $k_0$ is wrongly estimated.
\endenu

{\tiny
\begin{table}[htbp]
 \begin{center}
\begin{tabular}{ |c|c|c|c| }
\hline
\tiny Symbol &  \tiny Value & \tiny Symbol & \tiny Value\\
\hline
\hline
$q$ &\tiny 1 $[m^3/hr]$ &$R$ &  \tiny 1.98589 $[kcal/kgmol K]$ \\
\hline
$V$ & \tiny 1 $[m^3]$ &$\Delta H$ &  \tiny -$5960$ $[kcal/kgmol]$ \\
\hline
$k_0$& \tiny 3.5$\cdot 10^{7}$ $[1/hr]$  &$\rho C_p$& \tiny  480 $[kcal/(m^3 K)]$  \\
\hline
$E$ & \tiny 11850 $[kcal/kgmol]$ & $hA$ & \tiny  145 $[kcal/(K hr)]$  \\
\hline
$C_{in}$ & \tiny 10 $[kgmol/m^{3}]$ &  & \tiny    \\
\hline
\end{tabular}
\caption{Parameters of the system.}
\end{center}
\end{table}
 \label{tab}
}

To evaluate the performances of both estimators we carry out simulations  under different values of $\gamma_a$, different values of $\gamma_b$ and the last one assuming that  the  factor $k_0$ is measurable with an error of up to $\pm 20\%$  with respect to its real value. 

In all simulations we fix the  temperature $T_{in}$ as,

\begin{eqnarray}
\begin{aligned}
    T_{in}(t)& =
    \left\{
    \begin{aligned}
         297 & &  t\in[0,7) 
         \\
         299 & &  t \in  [7,12]  \\
           298 & & t >12
               \end{aligned}
    \right. 
    \end{aligned}
\label{Tin}
\end{eqnarray}
notice that the temperature changes are of very small amplitude. The initial conditions were taken as $C_A(0)=0.5$, $T(0)=120$,  the initial values of the parameter estimators were taken as $\mu^0= 0.1$, $\hat \eta_1^0=\hat \theta^0_{2,4}=-1$, $\rho_I(0)=7800$. We chose the filter parameter as $\lambda=1$ and the tuning gains of the estimator as $f_0=0.1$, $\alpha=68.4$, $\beta_0=30.6$ and $M=130.5$.
 
 For the first simulation we  propose 
 \begequ
 \lab{tw}
 T_w(t)=350-20\exp(-0.001t)\cos(4t),
 \endequ
 and different  values of $\gamma_a$. The transient behavior of the  estimation errors for each $\gamma_a$ is identified by the color in Fig. \ref{Tw1}. For all adaptation gains the response is quite smooth and, as predicted by the theory, the rise time diminishes with increasing gains. Also, we notice that the response of $\hat \theta_5$ is significantly slower\footnote{Notice the difference in time scales.} than the other four parameters, which stems from the fact that---as discussed in Subsection \ref{subsec33}---the correct estimation of $\hat \theta_5$ depends on the convergence to zero of $\tilde \theta_{1,4}$. Also, notice the presence of an unexpected small oscillation around $4$hrs, that may be due to numerical inaccuracies. Other signals $T_w$ were tested, {\em e.g.}, without exponential decay and different frequencies, observing a similar behavior in all cases.

\begin{figure}[!htp]
 \centering
  {
    \includegraphics[width=0.45\textwidth]{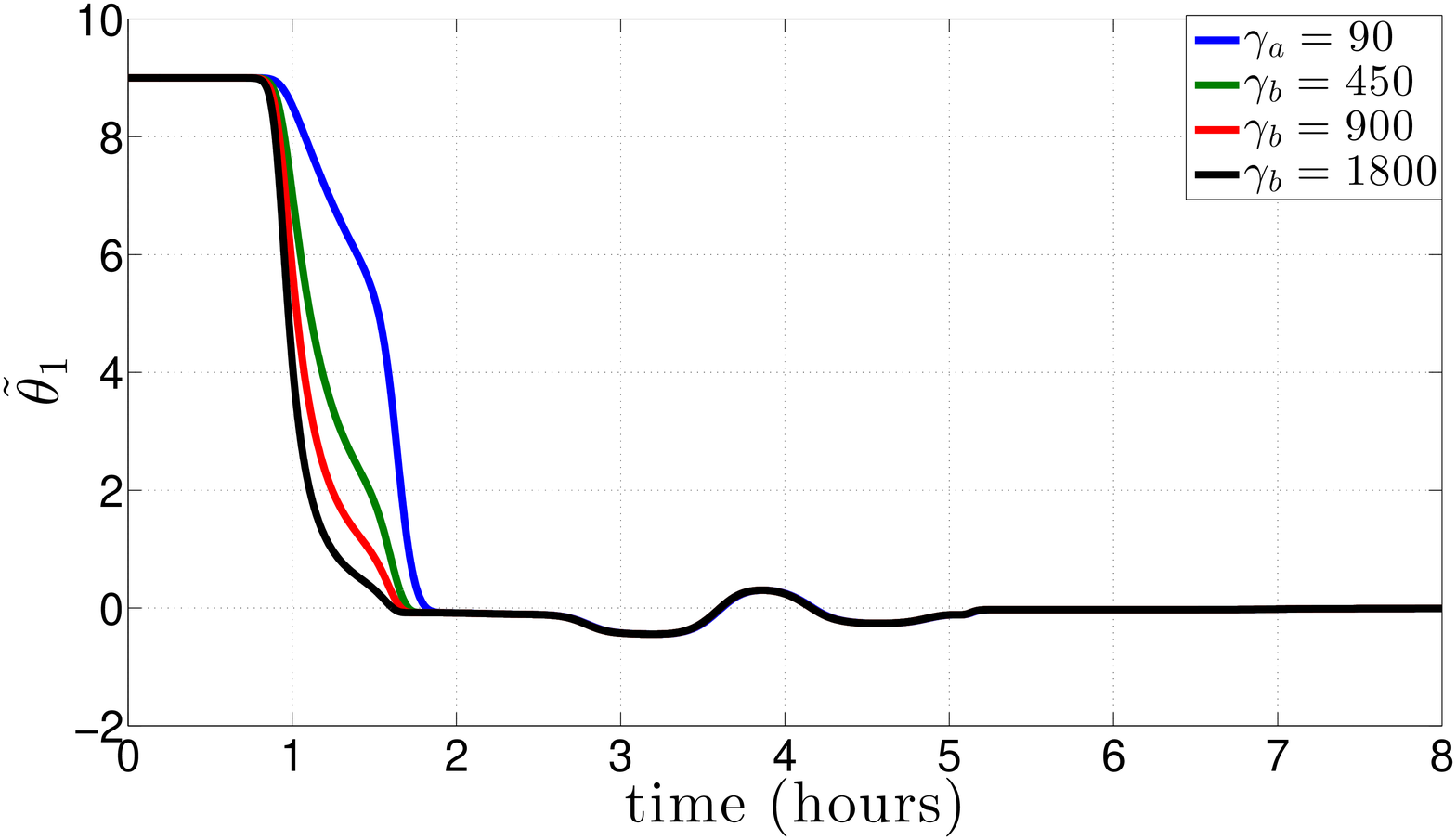}
    }
 {
    \includegraphics[width=0.45\textwidth]{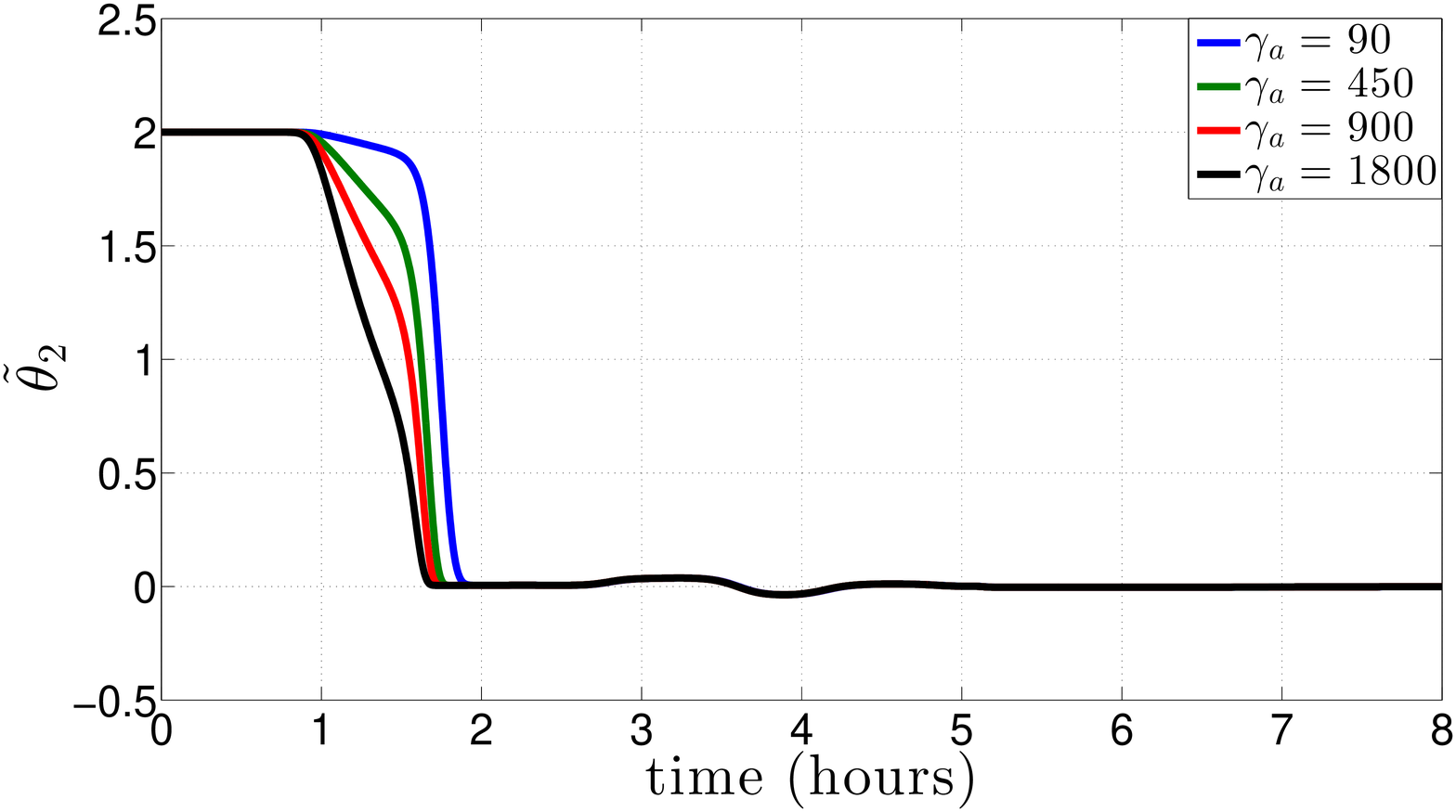}
    }
  {
    \includegraphics[width=0.45\textwidth]{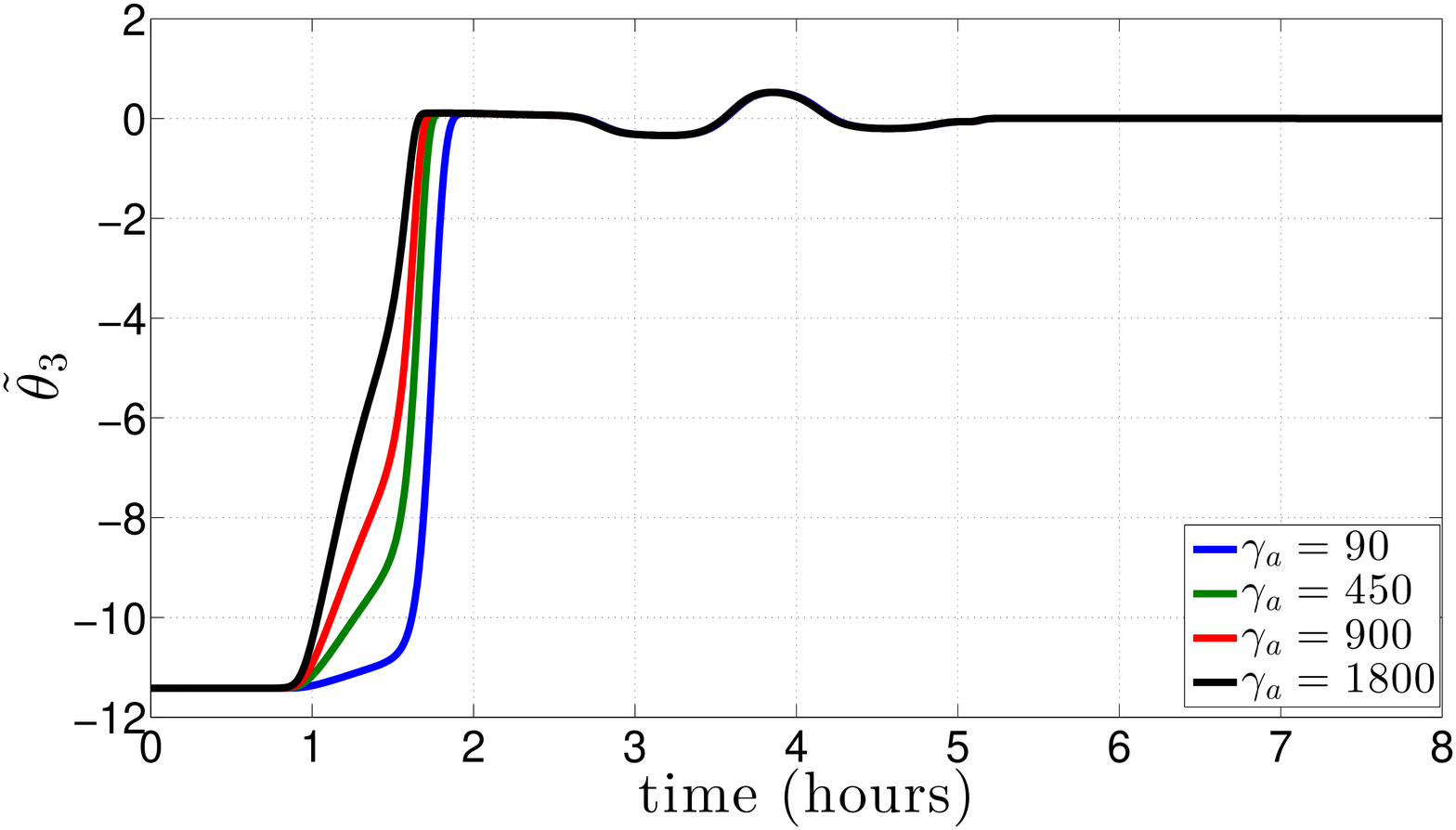}
    }
 {
    \includegraphics[width=0.45\textwidth]{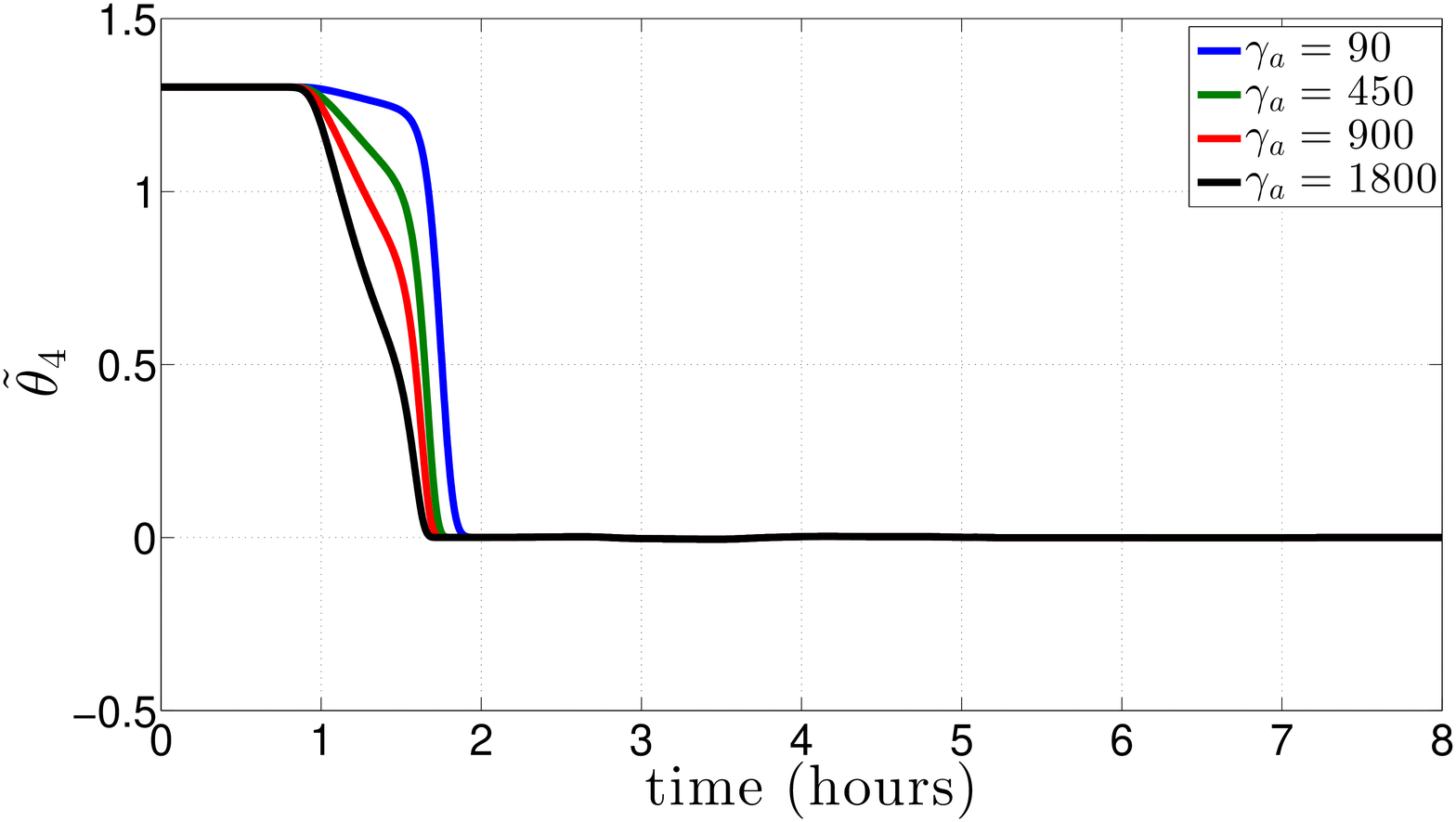}
    }
     {
    \includegraphics[width=0.45\textwidth]{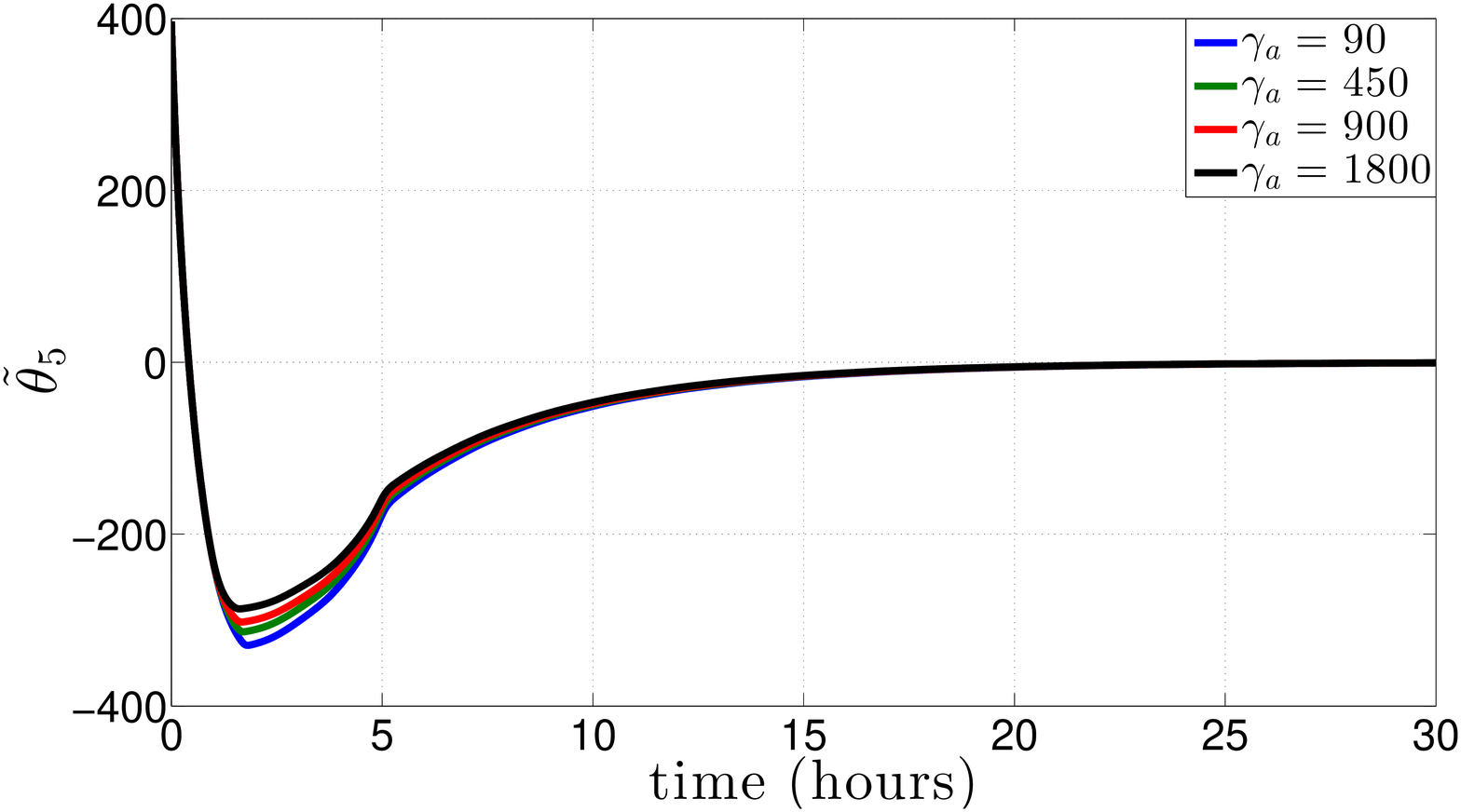}
    }    
 \caption{Transient behavior of the estimation errors $\tilde \theta_{1,5}$ with different values $\gamma_a$ and $T_{w}=350-20\exp(-0.001t)\cos(4t)$.}
    \label{Tw1}
\end{figure}

To assess the effects of the adaptation gain $\gamma_b$ an additional simulation was carried out fixing $\gamma_a=1800$, $T_{in}$ as \eqref{Tin} and $T_w$  as \eqref{tw}. Notice that the behavior of the I\&I estimator does not influence the estimates $\hat \theta_{1,4}$, for this reason, we only show the  transient behavior of the  estimation error $\tilde \theta_5$. Fig. \ref{fig3} shows estimation error $\tilde \theta_5$ for different values of $\gamma_b$  distinguished by the line color in the label of the figure.

\begin{figure}[htp]
\centering
  \includegraphics[width=.85\linewidth]{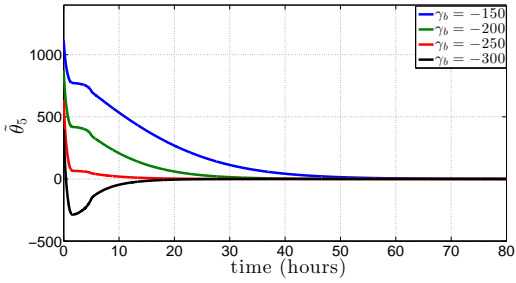}  
  \caption{Transient behavior of the estimation error  $\hat \theta_5$ for different values of $\gamma_b$.}
 \label{fig3}
\end{figure}

Finally, we evaluated the effect in the I\&I estimator of  errors in the {\em a priori} fixed value of the thermal factor $k_0$. The simulation scenario was the same as the one above.   Fig. \ref{fig4} presents $\tilde \theta_5$ assuming that $k_0$ has a measurable error, whose value  is distinguished by the line color in the label of the figure. From the figure we see that the imprecise knowledge of $k_0$ induces a steady-state bias in the estimate, which increases with the size of the error on $k_0$.  

\begin{figure}[htp]
\centering
  \includegraphics[width=.85\linewidth]{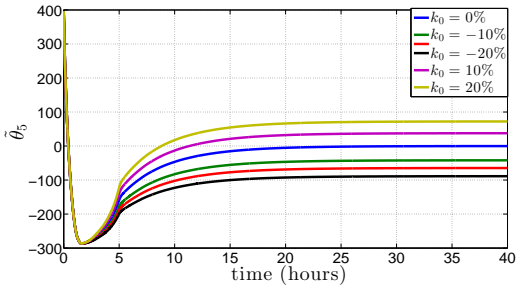}  
  \caption{Transient behavior of the estimation error  $\hat \theta_5$ with incorrect values of $k_0$.}
 \label{fig4}
\end{figure}

\section{Concluding Remarks}
\lab{sec5}
%
We have presented in this paper the first solution to the problem of on-line estimation of the parameters of the classical model of a CSTR given in \eqref{sys}  with the following assumptions: 
\begite 
\item the reactor state---that is, the product concentration and the temperature---are measurable; 
\item  the kinetic constant $k_0$ is known;
\item  the regressor vector \eqref{varphi} satisifes the (extremely weak) {\bf Assumption \ref{ass2}} of IE.
\endite 
It is important to underscore the dramatic difference between IE and the assumption of {\em persistent excitation} \cite[Section 2.5]{SASBODbook}, usually invoked in state observation and identification problems. Also, notice that in \cite{DRGetal}, where a neural network technique is used to approximate the behavior of the CST, the authors require 1000 step changes (!) in the heat exchanger temperature to obtain a reasonable approximation.  Similarly, in \cite{GUADOCPER} a dither signal, consisting of a large sum of sinusoids, is injected to the CSTR to enforce a persistent excitation assumption needed to ensure convergence of an extremum seeking controller. 

Current research is under way to relax the assumption of {\em known} $k_0$ that, as shown in the simulations, induces a non-negligible steady-state error in the estimates. Notice that, with the definition $\theta_6:=\ln k_0$ it is possible to write \eqref{newregequ} in the form
$$
\dot T = \zeta_1 - \zeta_3 e^{\theta^\top_{5,6} \psi},
$$
with the definitions
\begalis{
\zeta_3 & := \theta_3 C_A,\;\psi := \begmat{ -{1 \over T} \\ 1}.
}
Interestingly, mimicking the procedure of Lemma \ref{lem2}, it is possible to design an I\&I estimator for $\theta_{5,6}$ that ensures $\liminf \tilde \theta^\top_{5,6}(t) \psi(t) =0$, independently of the excitation properties of the vector $\psi$. Unfortunately, and rather surprisingly, it can be shown that the equilibrium associated to $\tilde \theta_{5,6}=0$ is always {\em unstable}! Alternative designs of the estimator to overcome this problem are now being explored and we expect to be able to report them in the near future. \\

\begcen {\bf \normalsize Credit authorship contribution statement} \endcen
All authors contributed equally to the paper.


\end{document}